\title{An Implementation of the Quantum Verification of Matrix Products Algorithm}
\author{Elton Pinto}
\date{}
\crefname{section}{§}{§§}
\crefname{figure}{Fig.}{Fig.}
\crefname{algorithm}{Alg.}{Alg.}
\crefname{table}{Table}{Table}
\theoremstyle{definition}
\newtheorem*{claim*}{Claim}
\theoremstyle{remark}
\begin{document}

\maketitle

\section{Introduction}

Quantum computing has experienced a recent surge in popularity given the
advancements in NISQ machines. Quantum algorithms are known to solve problems
like factoring numbers and simulating natural systems more efficiently than a
classical computer. Companies such as IBM, Google, and Rigetti have recently
built quantum computers that allow researchers to run quantum algorithms on
physical hardware. Given these developments, it has become important to survey
the feasibility of using quantum computing to solve large-scale problems.

Several studies have experimentally evaluated quantum algorithms on quantum
hardware. Mandviwalla et al. evaluated a 4-qubit implementation of Grover
search on the IBM-Q quantum processor \cite{mandviwalla2018implementing}.
Similarly, Acasiete et al. evaluated quantum random walks on graphs with 8 to
16 vertices \cite{acasiete2020implementation}. However, these studies do not
highlight the challenges involved when scaling to larger-sized circuits.

Extensive work has been carried out on developing quantum programming
frameworks. IBM has created Qiskit, a Python framework that supports prototyping
and executing quantum algorithms on simulators and quantum hardware. ORNL has
developed QCOR, a heterogenous classical-quantum framework that aims to use
quantum computers as accelerators akin to GPUs \cite{mintz2020qcor}. However,
little work has been done to evaluate the efficacy of using these frameworks to
develop quantum systems.

Our study aims to fill in these gaps by documenting the process of developing an
oracle for the Quantum Verification of Matrix Products (QVMP)
\cite{buhrman2005quantum}\cite{ambainis2002quantummatrix} algorithm. This can be
tricky to do given the non-deterministic nature of quantum programs and the
shortage of formal verification tooling in this space. We implement this
algorithm in Qiskit and demonstrate its functionality by running it on the Aer
simulator. We report a proof of oracle correctness, circuit metrics (gate count,
qubit count, circuit depth), transpilation times, and simulation times.

\section{Literature Review}

Quantum computing places an emphasis on thinking about how computation is
performed physically. It achieves speedup over classical algorithms by
exploiting physical phenomenon like entanglement to encode and perform
computation. The main promise of the field is that it can offer a non-trivial
speedup over classical computing. The algorithm that is frequently brought
up to demonstrate this promise is Shor’s algorithm. Shor’s
algorithm provides an exponential speedup over classical algorithms for
factoring numbers and is capable of cracking a subset of the widely-used RSA
encryption. It achieves this by using the Quantum Fourier Transform (QFT), the
quantum analogue of a Fourier transform, to perform phase estimation and
order-finding \cite{nielsen2000quantum}.

There are other algorithms like superdense coding, quantum key
distribution, and quantum simulation that have potential applications in
scientific simulation, machine learning, and cryptography. However, these
algorithms tend to require a very large number of qubits to be of practical use.
One of the major challenges in developing large-scale multi-qubit systems is
error-correction and noise. Before we reach the holy grail of  fault-tolerant
quantum systems, the field is currently attempting to make use of Noisy
Intermediate Quantum Computers (NISQ) to solve problems of important practical
use.

Grover's algorithm is a quantum search algorithm that provides an $O(\sqrt{N})$
algorithm for searching through unstructured data, which is a quadratic speed-up
over its classical counterpart. It does this by performing multiple Grover
iteration steps which constructively amplify states that correspond to search
results \cite{nielsen2000quantum}. A Grover iteration requires a user-defined
oracle to function correctly. The job of the oracle is to report back whether an
input (encoded as a quantum state) satisfies the search criteria. The core of
Grover search is farily straightforward to implement. The main challenge
here is efficiently encoding the oracle (which is typically described as
a classical decision function) as a quantum circuit. In this study, we try to
better understand this challenge by implementing QVMP which uses Grover search
(and therefore an oracle) as a sub-routine.

There are two popular algorithms for QVMP. The first algorithm, proposed by
Ambainis, Buhrman, Høyer, Karpinski, and Kurur, uses amplitude amplification
along with Grover search to look for a sub-matrix that doesn’t satisfy the
product \cite{ambainis2002quantummatrix}. This algorithm runs in
$O(n^{\frac{7}{4}})$ time and improves upon the optimal classical bound provided
by Freivalds \cite{freivalds1979fast}. The speedup is obtained because the
algorithm makes use of interference to arrive at a result in a smaller number of
iterations. However, metrics do not exist for the number of qubits required to
implement the oracles for the quantum search algorithms used, and the resources
required to carry out operations like multiplying sub-matrices.  Further, little
research has been done on evaluating the algorithm in a heterogenous
classical-quantum setup where quantum computers are used to accelerate certain
parts of the algorithm. There exists a 4-qubit physical implementation of Grover
search on IBM’s quantum processor \cite{mandviwalla2018implementing}. This
implementation tests IBM quantum computers on Grover’s algorithm to investigate
the impacts of different circuit and device attributes, and to highlight the
current capabilities of the system.  This study reports that current quantum
computers are able to solve the search problem on very small data sets. This is
similar to what our study intends to do, however, it does not investigate the
practicality of running algorithms that use Grover search and does not comment
on the composability of circuits and how it affects performance and results. The
second algorithm, proposed by Buhrman and Spalek, uses quantum random walks to
speed up the verification process and runs in $O(n^\frac{5}{3})$ time
\cite{buhrman2005quantum}.

Quantum random walks are analogous to classical walks, and have a number of
applications in quantum programming tasks. For example, they are used in solving
the element distinctness problem, in which the goal is to find if there exists a
set of M non-distinct elements in a domain of N elements
\cite{ambainis2007quantumwalk}. There have been attempts to run quantum random
walks on quantum hardware. Balu et al.  implemented an efficient physical
realization of a quantum random walk using $log_2(N)$ qubits to represent an
$N$-point lattice \cite{balu2018physical}.  Experimental evaluation was
carried out on the IBM-Q five-qubit processor. To overcome resource
requirements, they used a continuous time-limit quantum random walk
implementation. Acasiete et al. have implemented discrete-time quantum random
walks on IBM-Q, and were able to run quantum search based algorithms on graphs
with 8 and 16 vertices \cite{acasiete2020implementation}. They were able to
obtain results with 50\% fidelity, and claim that the results are more efficient
than equivalent classical algorithms.

There exists research on resource estimate quantification and benchmarking for
some quantum algorithms. Jaques et al. implemented Grover oracles for key
search on AES and LowMC encryption \cite{jaques2020implementing}. They lay
out a formal description of the oracle,  describe a reversible quantum-gate
implementation of the AES encryption-decryption algorithm, and estimate the
number of Clifford, T, and CNOT gates required for running circuits that can
crack AES-128, AES-192, and AES-256. The project uses Q\#, a quantum
programming language developed by Microsoft. The project reduces the circuit
depth of the Grover oracle by using internal parallelization, in which the
Grover search instance is run on disjoint subsets of the input domain.

A number of open-source frameworks exist for conducting quantum computing
research. IBM provides the Qiskit framework which lets researchers quickly
prototype and test algorithms on a simulator, and also run some workloads on a
quantum computer, the biggest one being the IBM-Q 16-qubit processor in
Melbourne. Fingerhuth et al. have compiled comparisons between Qiskit and other
frameworks like Quil, XACC, and ScaffCC \cite{fingerhuth2018open}. They comment
on the programming language choice, documentation, license, and general culture
around these communities. However, they do not compare these frameworks based on
their performance and ability to execute on quantum hardware. LaRose has
compared simulator performance and the quantum compiler of Qiskit and Rigetti
\cite{larose2019overview}.  The study does not report which algorithm was used
during performance evaluation. Instead, it qualifies a benchmark based on the
number of qubits used. ORNL has developed QCOR, which is a heterogenous
framework that aims to enable developers to use quantum computers as
accelerators, much like GPUs \cite{mintz2020qcor}. QCOR doesn’t support
amplitude amplification, quantum random walks, and basic circuits for performing
arithmetic as of now. Support will need to be added to facilitate
experimentation using the hybrid classical-quantum programming approach provided
by this framework. Salm et al. has worked on a NISQ analyzer that determines the
best quantum computer system to run a given workload based on the nature of the
quantum algorithm \cite{dustdar2020nisq}. They believe that this will improve
developer experience by obviating the need to understand complicated mathematics
to determine the best machine for running a particular quantum programming task.
None of the frameworks currently have a working implementation of quantum
verification of matrix products which we can use to perform benchmarking.

We believe that it is important to have estimates on how big of an input a
concrete implementation of an algorithm can process. We can use such evaluation
reports to gauge the current state of quantum computing and suggest areas which
need more improvement. Further, we can provide valuable feedback to library
authors about features that need to be added to facilitate productive quantum
algorithm research.

\section{Materials and Methods}

This section covers the major components of QVMP, the algorithm itself,
implementation details, and experimental setup.

\subsection{Grover search} \label{sec:grover_search}

Grover’s algorithm is a popular quantum search algorithm. Given an input space
of $N$ elements and an oracle $U_f$, Grover search can find $M$ solution
indices in $O(\sqrt{\frac{N}{M}})$ time. For simplicity, we assume that $N$ is
a power of 2. 

For $M = 1$ Grover search runs in $O(\sqrt{N})$ time, which is a quadratic
speedup over the classical algorithm for searching in an unstructured database
which takes $O(N)$ time. Therefore, Grover search offers a significant speedup.

The algorithm works by repeatedly applying a Grover operator $G$ to the initial
state $H^{\otimes n}\ket{0}^{\otimes n }$:

\begin{equation}
  G = (H^{\otimes n}(2\ket{0}\bra{0} - I)H^{\otimes n})U_f = (2\ket{\psi}\bra{\psi} - I)U_f
\end{equation}

It consists of the oracle $U_f$ and a phase shift operator
($2\ket{\psi}\bra{\psi} - I$) known as the diffuser. The specific
characteristics of $U_f$ are described in \cref{sec:grover_oracles}.

Each iteration can be geometrically viewed as a rotation of the state vector in
a plane spanned by the uniform superposition of solutions and non-solutions.
After the application of $U_f$, the diffuser rotates the state vector towards
the superposition of solutions. The number of such iterations can be shown to
be  $O(\sqrt{\frac{N}{M}})$. Therefore, in order to use Grover’s algorithm, one
needs to know the exact number of solutions $M$ in the search space.

The Grover operator circuit is summarized in \cref{fig:grover_operator_circuit}.

\begin{figure}[!h]
  \centering
  \includegraphics[width=0.7\textwidth]{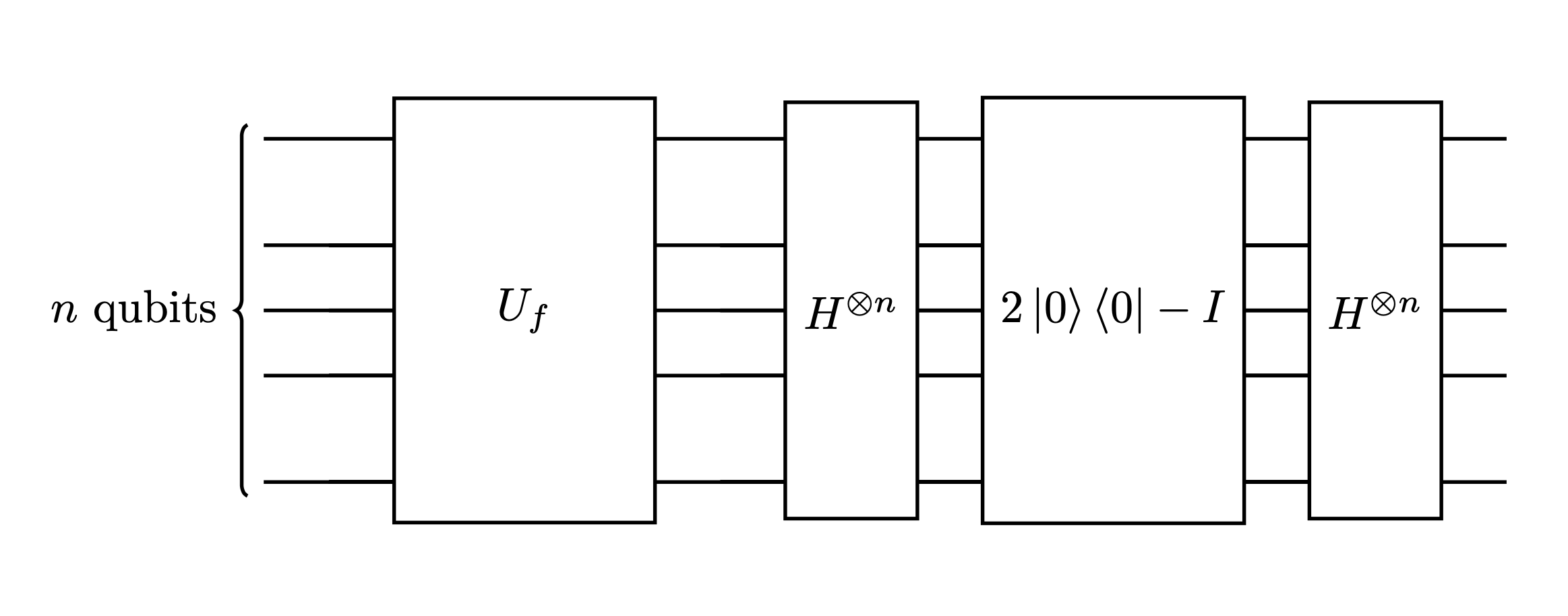}
  \caption{Grover operator circuit}
  \label{fig:grover_operator_circuit}
\end{figure}

\subsubsection{Grover Oracles} \label{sec:grover_oracles}

The oracle $U_f$ used in Grover’s algorithm can be viewed as black box that
knows how to recognize solutions in the search problem. Let us say we are given
a function $f$ which takes an input $x \in N$ and returns 1 if $x$ is a
solution to the search problem, and 0 otherwise. Then, the action of $U_f$ can
be written as:
\begin{equation}
  U_f\ket{x} \mapsto (-1)^{f(x)}\ket{x}
\end{equation}
Note how the oracle only applies a phase shift to solutions of the search
space. 

The above oracle is commonly implemented by encoding $f$ in a quantum circuit
that flips a target qubit $\ket{y}$ for all inputs $x$ that are a solution to the
search problem. We obtain the phase shift by initializing the target qubit to
the $\ket{-}$ state.

\subsection{Amplitude Amplification} \label{sec:amplitude_amplification}

Amplitude amplification is a generalization of the Grover operator $G$. Instead
of wrapping Hadamard gates $H$ around the diffuser, we now use an arbitrary
unitary $U$.
\begin{equation}
  U(2\ket{0}\bra{0} - I)U^{\dagger}O
\end{equation}
The oracle $O$ behaves in the same way as described in \cref{sec:grover_oracles}.
The unitary can be thought of as a quantum subroutine $A$ that performs a
series of quantum operations without making measurements.

\subsection{Quantum Verification of Matrix Products}

Given three square matrices $A$, $B$, and $C$ of size $n$, the verification of
matrix products (VMP) decides if $AB = C$. Freivalds describes a classical
algorithm which can run in $O(n^2)$ time.

In this paper, we implement the recursive Grover search based quantum VMP by
Ambainis et al. \cite{ambainis2002quantummatrix}. The algorithm proceeds by first
partitioning $B$ and $C$ into submatrices $B_i$ and $C_i$ of size $n \times
\sqrt{n}$ respectively. It is easy to observe that $AB = C$ iff $AB_i = C_i \
\forall i$. Now, perform amplitude amplification over the following subroutine:
pick a random vector x, classically compute $y = B_ix$ and $z = C_ix$, and
verify the product $Ay = z$. The verification is done using a Grover search
where the search space is the set of row indices and the oracle verifies if the
inner product between the row and the vector matches the output.

The verification oracle takes $O(n)$ time. Therefore, each Grover iteration
runs in $O(n^{\frac{3}{2}})$ time. We need to run $\sqrt{\frac{n}{\sqrt{n}}} =
n^{\frac{1}{4}}$ iterations of amplitude amplification. Therefore, the overall
running time of the algorithm is $O(n^{\frac{7}{4}})$.

The algorithm is summarized in \cref{alg:qvmp_grover}.

\begin{algorithm}
  \caption{Quantum VMP using Grover Search \cite{lanl2018quantum}}
  \label{alg:qvmp_grover}
  \textbf{Input: } $n \times n$ matrices $A, B, C$ \\
  \textbf{Output: } 1 if $AB = C$ and 0 otherwise \\
  \textbf{Procedure: }
  \begin{enumerate}
    \item Partition $B$ and $C$ into sub-matrices of size $n \times \sqrt{n}$
    \item 
      {
        Perform amplitude amplification for $n^{\frac{1}{4}}$ iterations using this subroutine:
        \begin{enumerate}
          \item Pick a random vector $x$ of size $\sqrt{n}$
          \item Classically compute $y = B_ix$ and $z = C_ix$
          \item Using Grover search with $\sqrt{n}$ iterations, find a row of
            index $j$ such that $(Ay \neq z)_j$
        \end{enumerate}
      }
    \item XOR the sub-results
  \end{enumerate}
\end{algorithm}

\subsection{Implementation}

We implement QVMP in Qiskit, a popular open-source quantum computing platform developed by IBM.
Qiskit uses Python as the host language and has a large libary of utilities that allow
developers to compose non-trivial circuits. It ships with a transpiler and
several backends capable of running circuits on simulators and quantum hardware.

We restrict our implementation to only support binary matrices. Remaining
details can be found in \cref{sec:analysis}.

\subsection{Experimental Setup}

The experiments were run on an AMD EPYC 7502 32-Core Processor, 2.5 GHz, 128
CPUs. The source code is available at \url{https://github.com/1ntEgr8/qvmp}.

\section{Implementation} \label{sec:analysis}

\begin{figure}
  \centering
  \scalebox{1.0}{\input{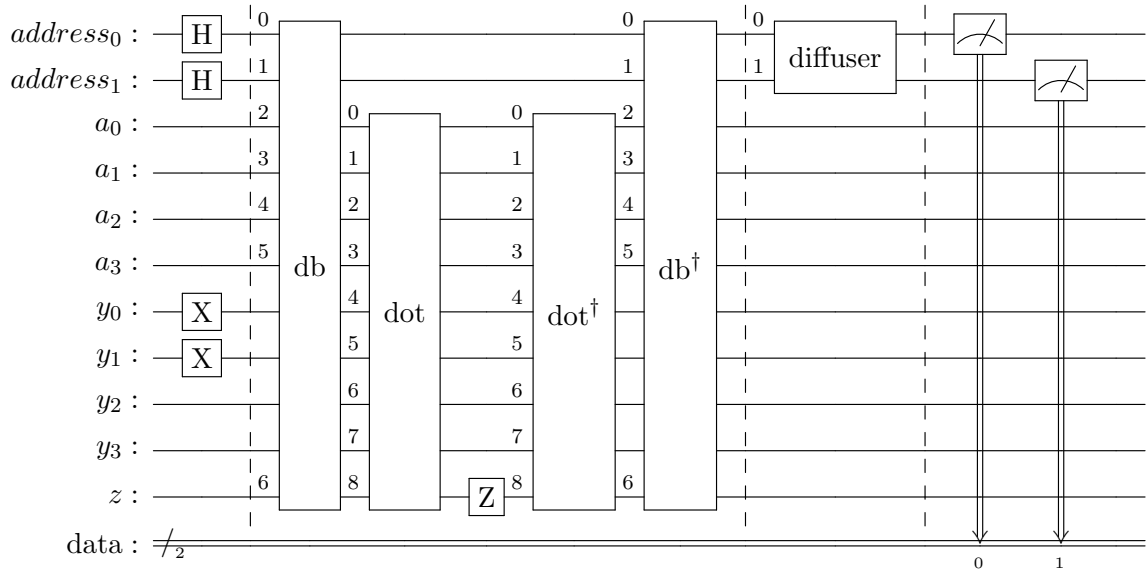}}
  \caption{QVMP circuit for a $4 \times 4$ matrix $A$ performing one iteration}
  \label{fig:qvmp_oracle_4x4}
\end{figure}

This section describes and analyzes the implementation of the quantum portion of
the algorithm (steps 2 and 2.3 of \cref{alg:qvmp_grover}). 

\cref{fig:qvmp_oracle_4x4} shows an example circuit that performs one
iteration of the Grover operator used in step 2.3. The sub-circuits used are
Quantum Read-Only Memory (QROM) \cite{babbush2018encoding} (denoted by $db$), out-of-place
inner product (denoted by $dot$), and diffuser.

Amplitude amplification (step 2) is carried out by first running Grover
search (step 2.3) and then appending a diffuser circuit. Describing it in terms
of the notation developed in \cref{sec:amplitude_amplification}, $U$ would be
the Grover search circuit.

To perform Grover search, we need to implement an oracle. The QVMP oracle is a
blackbox that checks if $(Ay - z)_j \neq 0$ where $A$, $y$, $z$, and $j$ are as
defined in \cref{alg:qvmp_grover}. Qiskit offers the \verb+classical_function+
decorator which allows you to describe Grover oracles classically. However, the
programs are restricted to using boolean operators and do not have support for
higher-level primitives like matrices, vectors, and related operations (which we
need). This means that we needed to hand-code the oracle from scratch.

The QVMP oracle needs to perform the following steps:
\begin{itemize}[itemsep=1mm, parsep=0pt]
  \item Encode the inputs $A$, $y$, and $z$
  \item Calculate the inner product $Ay$ and for each row index $j$ compute $(Ay
    \neq z)_j$
  \item Mark a qubit if the constraint is satisfied (thereby obtaining a marking
    oracle)
  \item Convert the marking oracle into a phase oracle
  \item Uncompute on the ancilla qubits
\end{itemize}
We encode $A$ and $z$ using QROM. $y$ is encoded using a bit-to-qubit encoding.
The inner product is computed using an out-of-place inner product circuit. We
use a $Z$ gate to perform the phase-flip on the marked qubit, which correctly
converts our oracle from marking to phase (see \cref{proof:oracle_correctness}).
Finally, we uncompute the inner-product and QROM circuits to return the ancilla
qubits to their original state.

\subsection{QROM}

\begin{figure}
  \centering
  \input{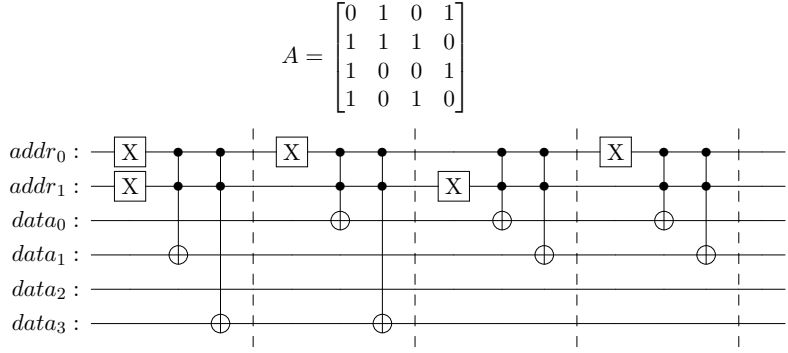}
  \caption{QROM encoding of a $4 \times 4$ matrix $A$}
  \label{fig:qrom_4x4}
\end{figure}

QROM can encode an $n \times m$ binary matrix using $m + \log_2(n)$ qubits. This
is quadratic improvement over a naive bit-to-qubit encoding where each binary
value is assigned to a qubit (resulting in $O(n^2)$ space-complexity). 

QROM takes in a row index encoded as a binary value using $log_2(n)$ address
qubits encodes the data for the row in $m$ data qubits.  This is done by
partitioning the circuit into $log_2(n)$ chunks, where each chunk corresponds to
an address. Using multi-controlled NOT gates it selectively updates the data
qubits corresponding to the input address with the corresponding values.

Since this is a quantum circuit, we can use superposition to pass in multiple
row-indices through the address qubits and read multiple rows. The caveat is, of
course, that we can only measure one of the rows. Nonetheless, we can still
exploit this feature of QROMs to perform the same computation on multiple rows
and boost the probability of measure a particular row through amplitude
amplification for example.

\cref{fig:qrom_4x4} shows a QROM circuit encoding a $4 \times 4$ matrix $A$.

\subsubsection{QROM in QVMP}

We use QROM to encode $A$ and $z$ (concatenated as an $n \times (m+1)$ matrix.
There is, however, the question of why we can't encode $y$ using QROM as well.
The answer is that we will not be able to perform the inner product between
$A_j$ and $y$ using such a scheme. This is because to perform the inner product
we need a single row of $A$ but \textbf{all} the rows in the vector $y$ because
of the way matrix multiplication works. $z$ can be encoded using QROM since we
only need the value of $z_j$.

\subsection{Inner Product (out-of-place)}

\begin{figure}
  \centering
  \input{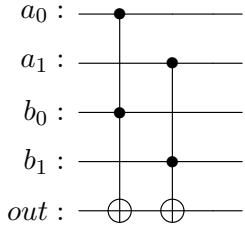}
  \caption{Inner product circuit for $2$-D vectors}
  \label{fig:inner_product}
\end{figure}

The inner product circuit, as the name suggests, computes the inner product
between two binary vectors encoded using bit-to-qubit encoding. Addition and
multiplication are as those typically defined for the field $F_2$. The circuit
consists of a series of Tofolli gates controlled on the corresponding inputs
with a single target qubit different from the input qubits (hence out-of-place).

\cref{fig:inner_product} shows an inner-product circuit between $2$-D vectors.

\subsubsection{Inner Product in QVMP}

We use the inner product circuit to compute $(Ay)_j$. Note that we are using $z$
as the target qubit. This serves the purpose of both calculating $(Ay)_j$ and
comparing it to $z$. The output of $z$ will be $0$ if they don't match and $1$
if they do. This is precisely why performing a $Z$ gate on $z$ is sufficient to
convert our oracle from marking to phase.

\subsection{Diffuser}

\begin{figure}
  \centering
  \input{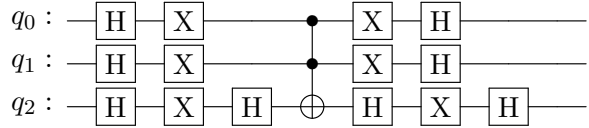}
  \caption{Diffuser running on three qubits}
  \label{fig:diffuser_3}
\end{figure}

The diffuser is implemented exactly as described in
\cref{sec:grover_search}. \cref{fig:diffuser_3} shows a diffuser running on
three qubits.

\subsection{Oracle correctness} \label{proof:oracle_correctness}

\begin{claim*}
  Given a superposition of row indices $addr$, an $n \times m$ binary matrix $A$,
  and $m$-dimensional binary vectors $y$ and $z$, the QVMP oracle will flip the
  phase of the row indices $j$ which satisfy $(Ay \neq z)_j$ and leave the ancilla
  qubits unchanged.
\end{claim*}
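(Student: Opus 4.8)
The plan is to verify the claim by tracking the quantum state through each sub-circuit of the oracle in \cref{fig:qvmp_oracle_4x4} and checking two things: that the net phase attached to each address $\ket{j}$ is $(-1)^{(Ay \neq z)_j}$, and that the $a$ and $z$ ancilla registers are returned to $\ket{0}$. Since every gate is linear, it suffices to prove this for a single computational-basis address $\ket{j}_{addr}$ and then extend to the superposition $\sum_j \alpha_j \ket{j}_{addr}$ by linearity. Before the main trace I would isolate two sub-lemmas: (i) QROM correctness, namely that for input address $\ket{j}$ the $db$ block maps $\ket{j}\ket{0}_a\ket{0}_z \mapsto \ket{j}\ket{A_j}_a\ket{z_j}_z$ where $A_j$ is the $j$-th row, which follows from the multi-controlled-NOT structure of \cref{fig:qrom_4x4} since exactly the chunk guarded by address $j$ fires; and (ii) inner-product correctness, namely that the $dot$ block of \cref{fig:inner_product} maps $\ket{A_j}_a\ket{y}_y\ket{b}_z \mapsto \ket{A_j}_a\ket{y}_y\ket{b \oplus (A_j \cdot y)}_z$ where $A_j \cdot y = (Ay)_j$ is the inner product over $F_2$, because each Toffoli adds one product term $a_i y_i$ into the target modulo $2$.

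With these in hand, the core of the argument is a short chain of state updates. Starting from $\ket{j}_{addr}\ket{0}_a\ket{y}_y\ket{0}_z$, the $db$ block produces $\ket{j}\ket{A_j}\ket{y}\ket{z_j}$; the $dot$ block then writes $z_j \oplus (Ay)_j$ into the $z$ qubit; the $Z$ gate on $z$ multiplies the whole term by $(-1)^{z_j \oplus (Ay)_j}$; and finally $dot^\dagger$ and $db^\dagger$ (which equal $dot$ and $db$, being composed of self-inverse Toffoli and NOT gates) reverse the first two steps, restoring the $z$ qubit to $\ket{0}$ and the $a$ register to $\ket{0}$. The surviving scalar is exactly $(-1)^{z_j \oplus (Ay)_j}$, and since $z_j \oplus (Ay)_j = 1$ over $F_2$ precisely when $(Ay)_j \neq z_j$, this equals $(-1)^{(Ay \neq z)_j}$, the desired phase.

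The conceptual crux, the marking-to-phase conversion, is the observation that the $Z$ gate acts after the mismatch bit $(Ay \neq z)_j$ has been written into $z$, so it imprints the answer as a phase rather than leaving it as a stored bit, exactly in the spirit of \cref{sec:grover_oracles}. The reason this phase is not undone by the subsequent uncomputation is linearity: once the scalar $(-1)^{(Ay \neq z)_j}$ multiplies the basis vector, it factors out of every later unitary and is therefore untouched by $dot^\dagger$ and $db^\dagger$, which act only on the ket. Re-summing over the superposition yields $\sum_j (-1)^{(Ay \neq z)_j}\alpha_j \ket{j}_{addr}\otimes\ket{0}_a\ket{y}_y\ket{0}_z$.

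I expect the main obstacle to be the uncomputation and disentanglement bookkeeping rather than any single hard identity. The phase flip is $j$-dependent, so it is a relative phase across the superposition; the argument only delivers a clean phase oracle on the address register if the $a$ and $z$ ancillas end in the product state $\ket{0}$ independently of $j$, so that they factor out and leave no which-path information to spoil the interference in the following diffuser. Establishing that $dot^\dagger db^\dagger$ exactly inverts $db\, dot$ on every branch of the superposition, including the branch-dependent intermediate value $z_j \oplus (Ay)_j$, is where the care is needed, and it reduces to the self-inverse property of the Toffoli- and NOT-based blocks together with sub-lemmas (i) and (ii) above.
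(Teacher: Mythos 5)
Your proposal is correct and follows essentially the same argument as the paper: trace the state through $db$, $dot$, $Z$, $dot^\dagger$, $db^\dagger$, observe that the $Z$ gate kicks back the phase $(-1)^{z_j \oplus (Ay)_j}$, and let the uncomputation restore the ancillas to $\ket{0}$. The only differences are stylistic — you reduce to a single basis address and invoke linearity where the paper carries the full superposition through each step, and you make explicit the QROM/inner-product sub-lemmas and the disentanglement concern that the paper uses implicitly.
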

\begin{proof}
  Let $a_1, a_2, \ldots, a_k$ be row indices selected from the set $\{0, \ldots,
  n-1\}$ for some $0 \leq k \leq n$.

  The initial state of the circuit (after encoding $y$) is
  \begin{equation*}
    \psi_0 = (\alpha_1\ket{a_1} + \alpha_2\ket{a_2} + \ldots +
    \alpha_k\ket{a_k})\ket{0}^{\otimes m}\ket{y}\ket{0}
  \end{equation*}
  where $\alpha_i \in \mathbb{C}$ are the amplitudes.

  Applying the oracle (using the notation of \cref{fig:qvmp_oracle_4x4}),
  \begin{align*}
    \psi_0 & \xrightarrow{db}
      (\alpha_1\ket{a_1}\ket{A_{a_1}}\ket{z_{a_1}} + \ldots +
       \alpha_k\ket{a_k}\ket{A_{a_k}}\ket{z_{a_k}})\ket{y} \\
    & \xrightarrow{dot}
      (\alpha_1\ket{a_1}\ket{A_{a_1}}\ket{z_{a_1} \oplus (Ay)_{a_1}} + \ldots +
       \alpha_k\ket{a_k}\ket{A_{a_k}}\ket{z_{a_k} \oplus (Ay)_{a_k}})\ket{y} \\
    & \xrightarrow{Z}
      ((-1)^{z_{a_1} \oplus
       (Ay)_{a_1}}\alpha_1\ket{a_1}\ket{A_{a_1}}\ket{z_{a_1} \oplus (Ay)_{a_1}} + \ldots +
       (-1)^{z_{a_k} \oplus
       (Ay)_{a_k}}\alpha_k\ket{a_k}\ket{A_{a_k}}\ket{z_{a_k} \oplus (Ay)_{a_k}})\ket{y} \\
    & \xrightarrow{dot^\dagger}
      ((-1)^{z_{a_1} \oplus
       (Ay)_{a_1}}\alpha_1\ket{a_1}\ket{A_{a_1}}\ket{z_{a_1}} + \ldots +
       (-1)^{z_{a_k} \oplus
       (Ay)_{a_k}}\alpha_k\ket{a_k}\ket{A_{a_k}}\ket{z_{a_k}})\ket{y} \\
    & \xrightarrow{db^\dagger}
      ((-1)^{z_{a_1} \oplus
       (Ay)_{a_1}}\alpha_1\ket{a_1} + \ldots +
       (-1)^{z_{a_k} \oplus
       (Ay)_{a_k}}\alpha_k\ket{a_k})\ket{0}^{\otimes m}\ket{y}\ket{0}
  \end{align*}
  $(-1)^{z_{a_i} \oplus (Ay)_{a_k}}$ equals $-1$ if $(Ay \neq z)_{a_k}$. It is 1
  otherwise.

  This proves the claim.
\end{proof}

\section{Results} \label{sec:results}

QVMP specifies that we run $N = n^{\frac{1}{4}}$
iterations of the Grover operator regardless of the number of solutions $m$. The
optimal number of iterations $N_{\text{optimal}} \approx
\frac{\pi}{4}\sqrt{\frac{n}{m}} \leq N,\, \forall \, n$. We can use either
number of iterations to perform QVMP. The former requires performing amplitude
amplification and while the latter doesn't. We explicitly mention which one we
are using when discussing our results.

The simulations were carried out using the Aer backend provided by Qiskit. Aer
supports different simulation methods, two of which are statevector and matrix
product state (MPS). The statevector method performs a dense statevector
simulation of the quantum state. The memory requirements for this method scale
exponentially with the number of qubits. We were able to simulate circuits
containing up to 32 qubits with this method before hitting memory limits. The MPS
method allows for more efficient operations on circuits with relatively low
entanglement. We were able to simulate the entirety of our desired input space
using this method.

\subsection{Functionality}

\begin{figure}[h!]
  \centering
  \begin{subfigure}{0.48\textwidth}
    \centering
    \includegraphics[width=\textwidth]{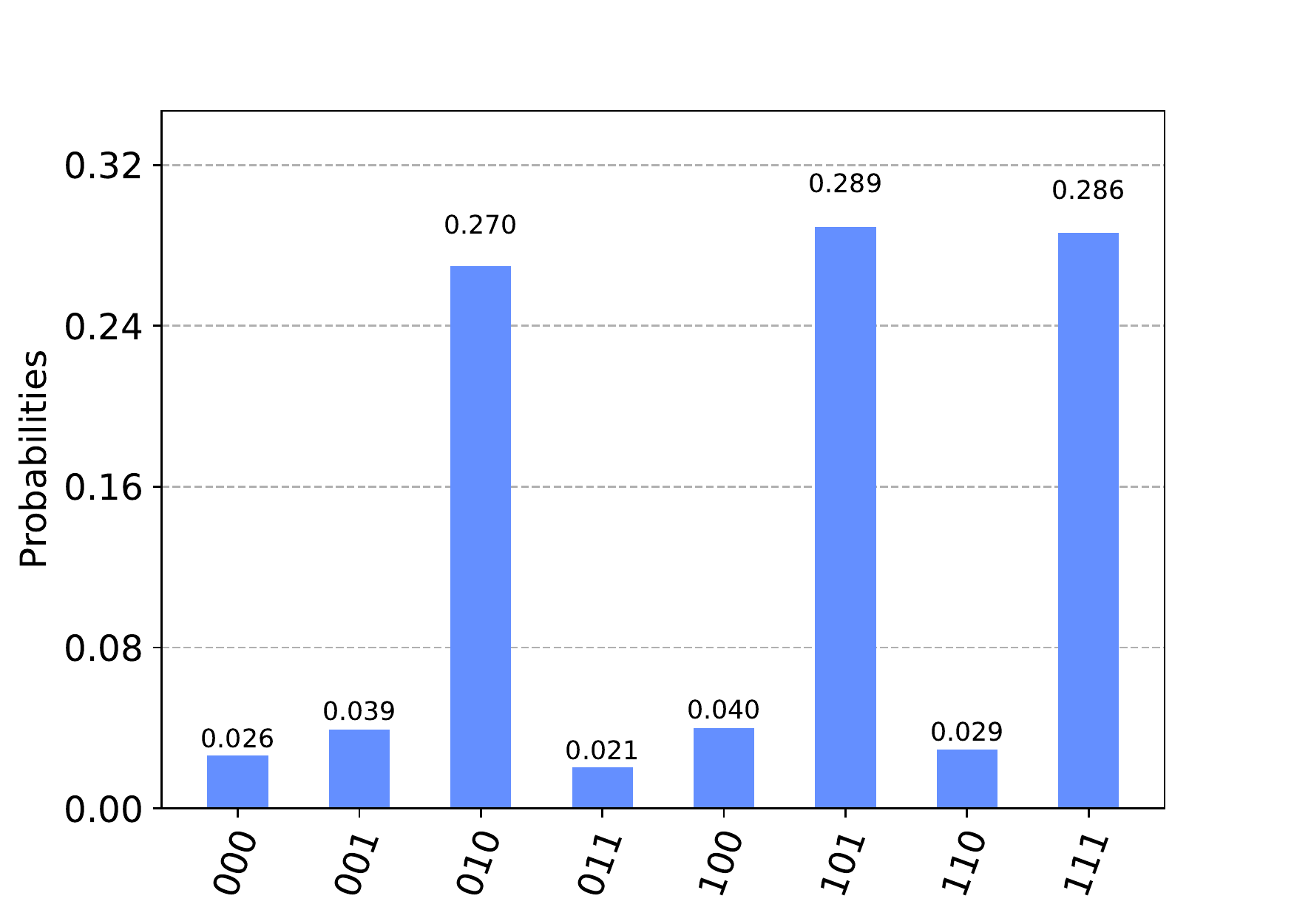}
    \caption{Using $N_{\text{optimal}}$}
  \end{subfigure}
  \begin{subfigure}{0.48\textwidth}
    \centering
    \includegraphics[width=\textwidth]{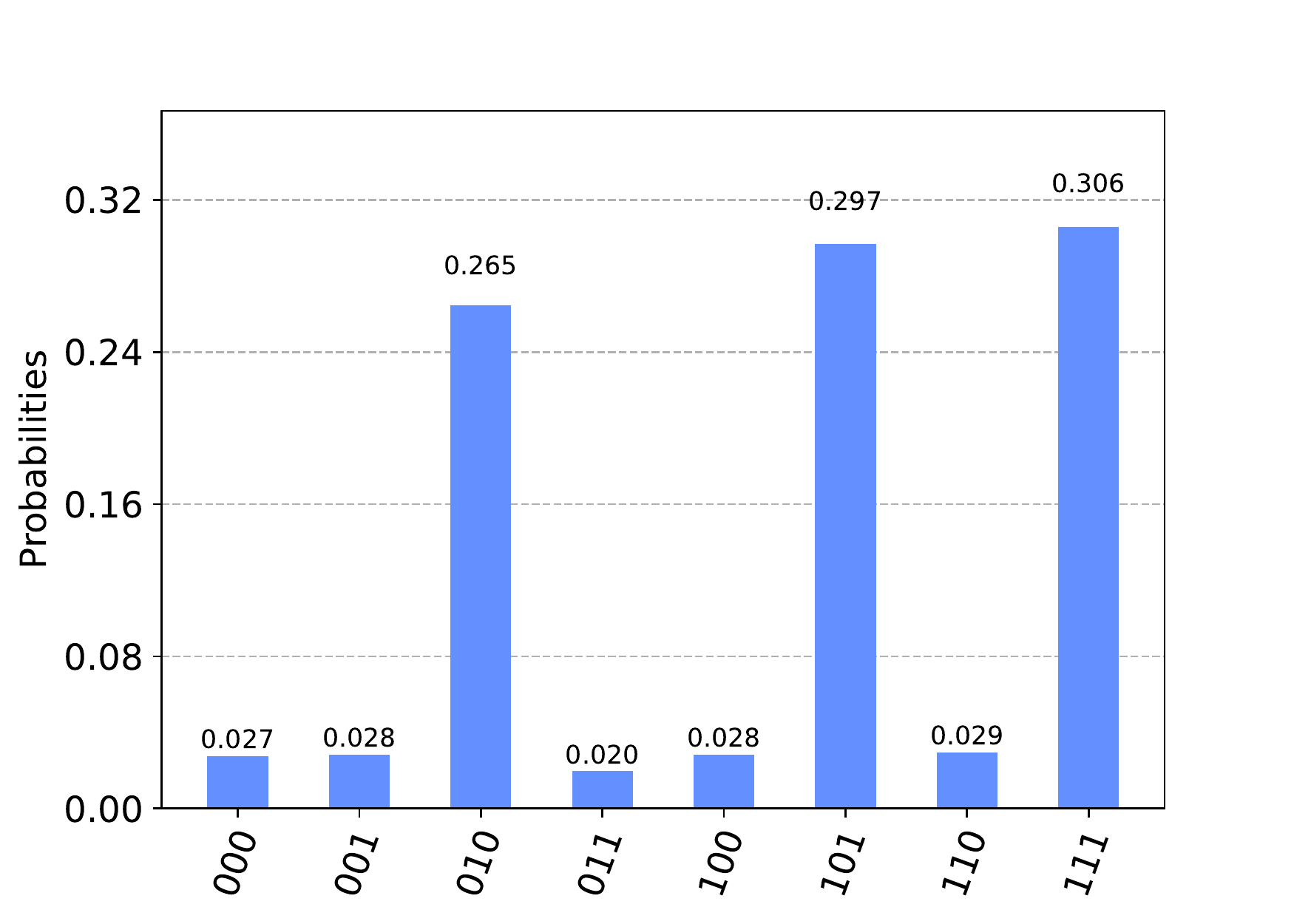}
    \caption{Using $N$}
  \end{subfigure}
  \caption{Histogram showing the probability distribution of measuring the
  address qubits of the QVMP circuit for an $8 \times 8$ matrix with $(Ay \neq
  z)_j$ for $j \in \{2, 5, 7\}$}
  \label{fig:qvmp_functionality_found}
\end{figure}

\cref{fig:qvmp_functionality_found} demonstrates that our implementation is able
to find the row indices under both $N$ and $N_{\text{optimal}}$ iterations.

\begin{figure}[h!]
  \centering
  \begin{subfigure}{0.48\textwidth}
    \includegraphics[width=\textwidth]{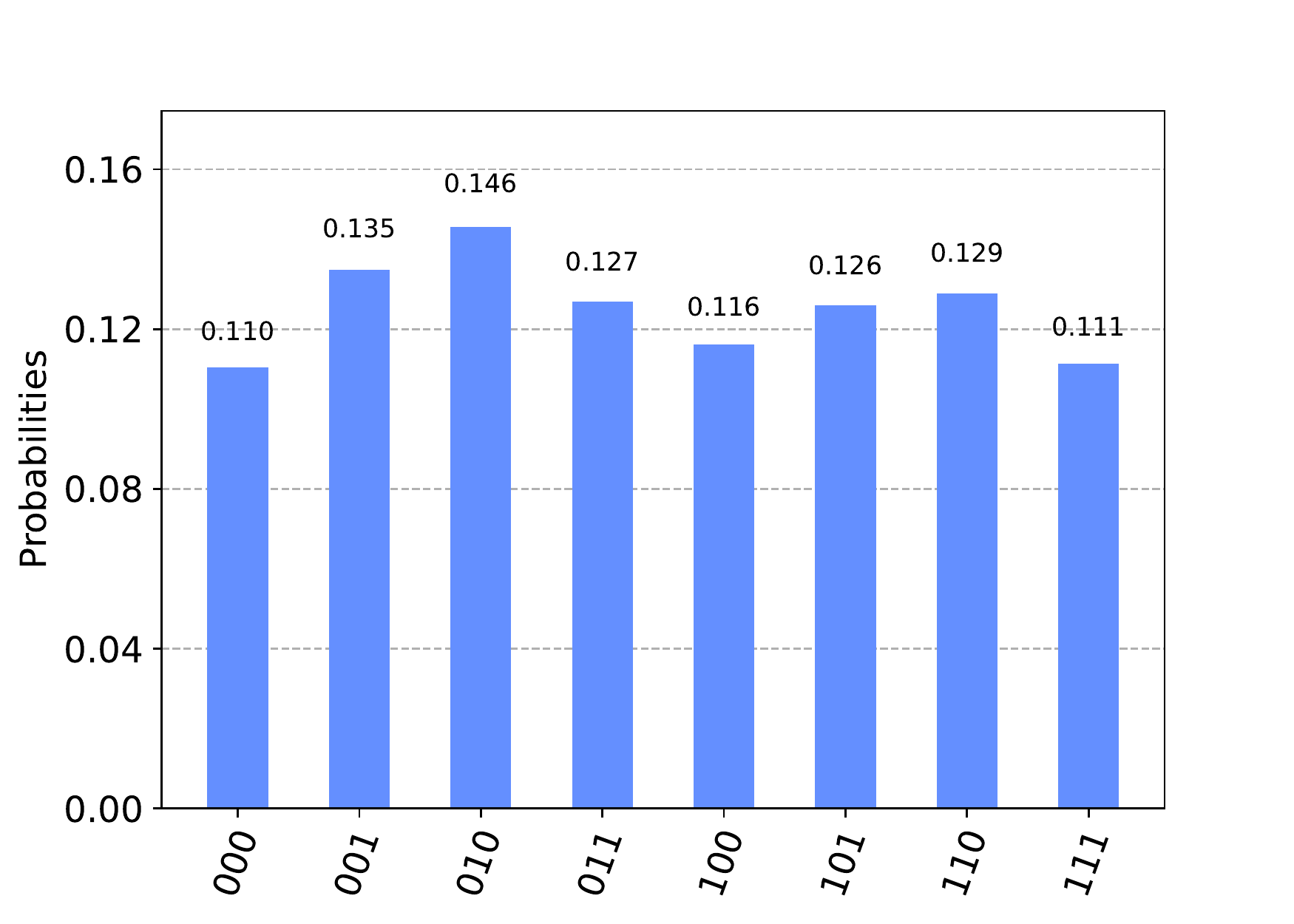}
    \caption{Using $N_{\text{optimal}}$}
  \end{subfigure}
  \begin{subfigure}{0.48\textwidth}
    \centering
    \includegraphics[width=\textwidth]{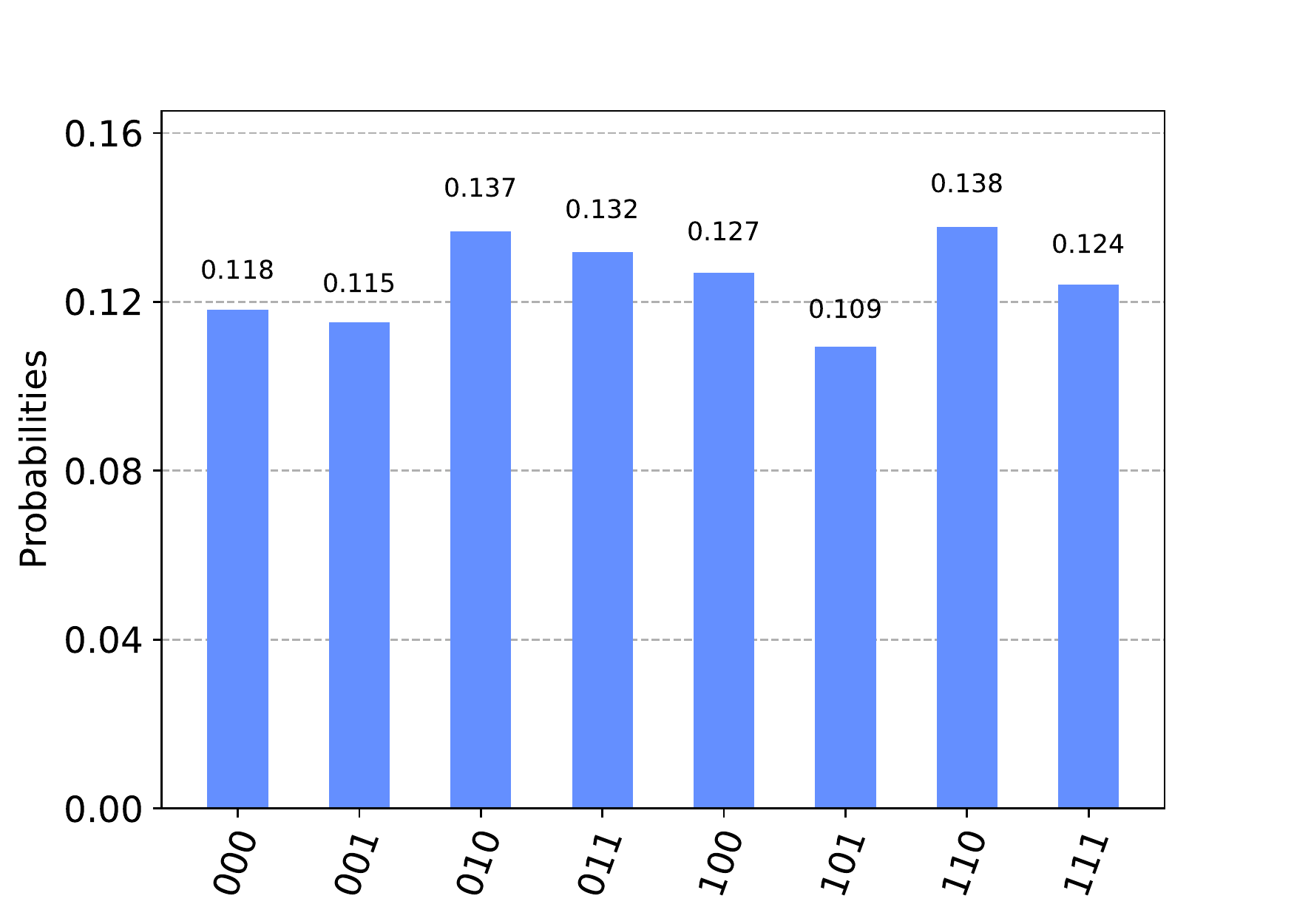}
    \caption{Using $N$}
  \end{subfigure}
  \caption{Histogram showing the probability distribution of measuring the
  address qubits of the QVMP circuit for an $8 \times 8$ matrix with no $j$
  satisfying $(Ay \neq z)_j$}
  \label{fig:qvmp_functionality_none}
\end{figure}

\cref{fig:qvmp_functionality_none} demonstrates the case when no solutions are
present. There is an equal probability of measuring any of the row indices.

\begin{figure}[h!]
  \centering
  \begin{subfigure}{0.48\textwidth}
    \centering
    \includegraphics[width=\textwidth]{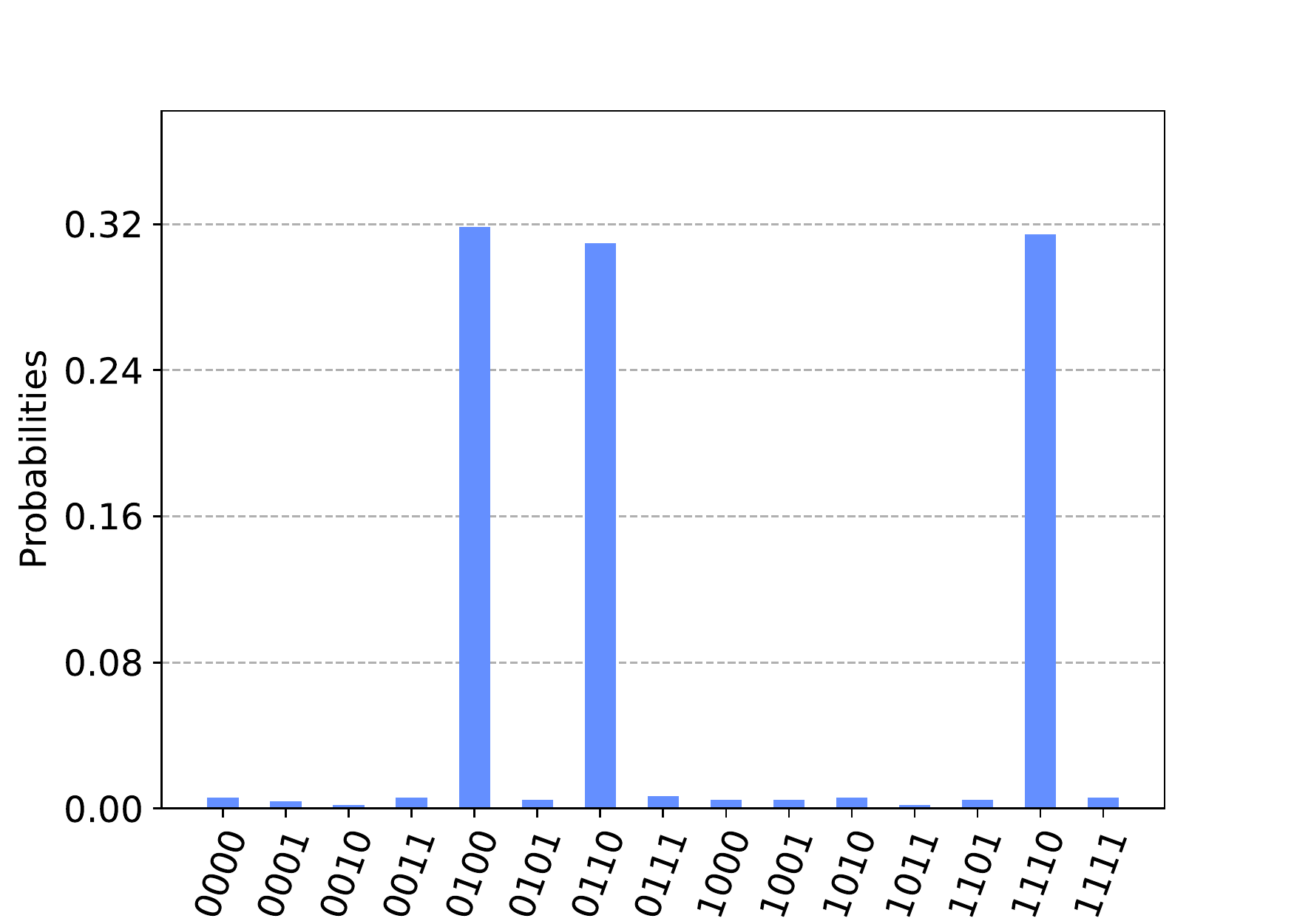}
    \caption{Using $N_{\text{optimal}} = 1$}
  \end{subfigure}
  \begin{subfigure}{0.48\textwidth}
    \centering
    \includegraphics[width=\textwidth]{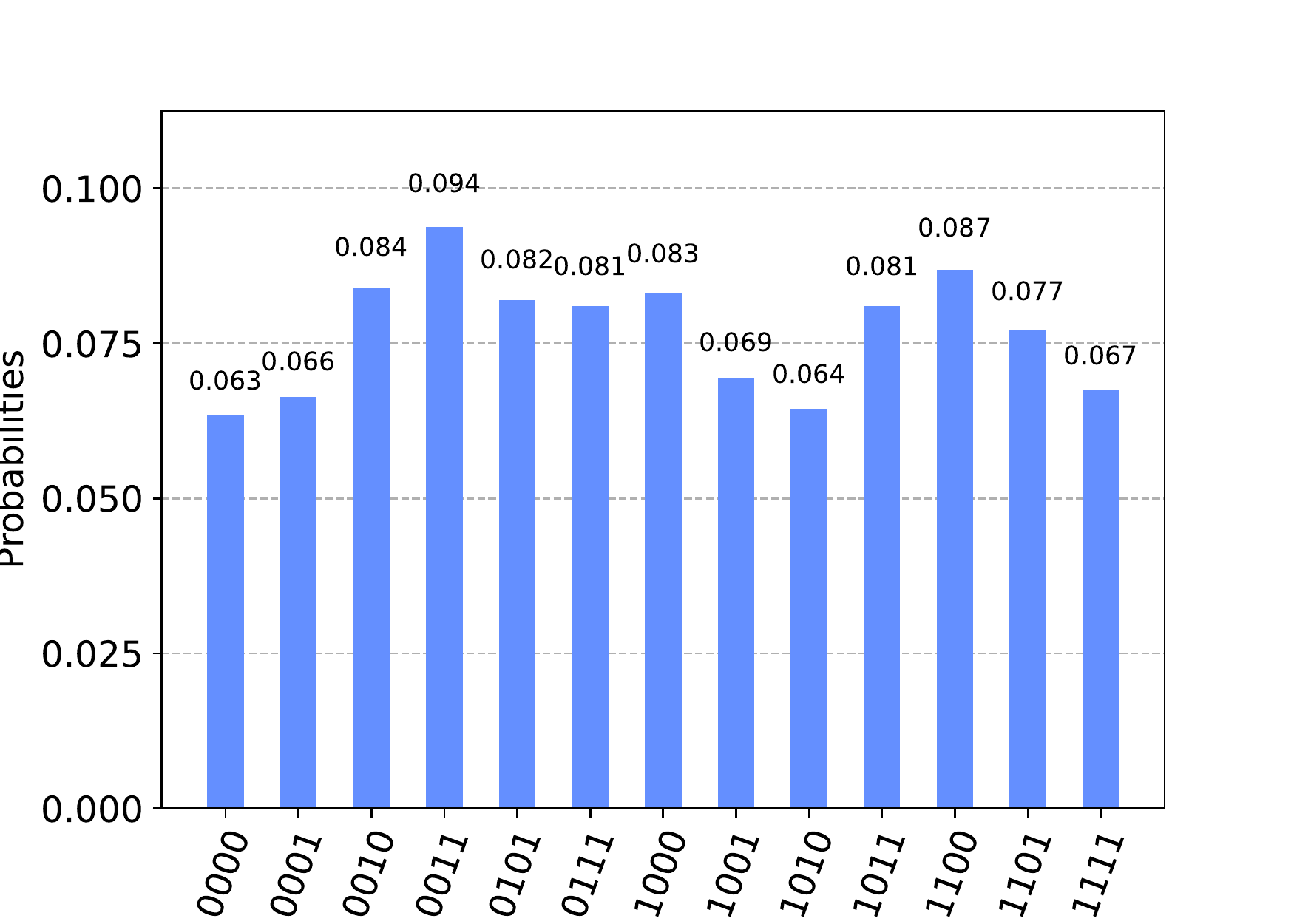}
    \caption{Using $N = 4$}
  \end{subfigure}
  \caption{\textbf{Failure case} Histogram showing the probability distribution
  of measuring the address qubits of the QVMP circuit for a $16 \times 16$
  matrix with $(Ay \neq z)_j$ $j \in \{1, 4, 5\}$. $N \neq \lfloor
  k*\text{period} \rfloor$}
  \label{fig:qvmp_functionality_pfound_unknown__1}
\end{figure}

\begin{figure}[h!]
  \centering
  \begin{subfigure}{0.48\textwidth}
    \includegraphics[width=\textwidth]{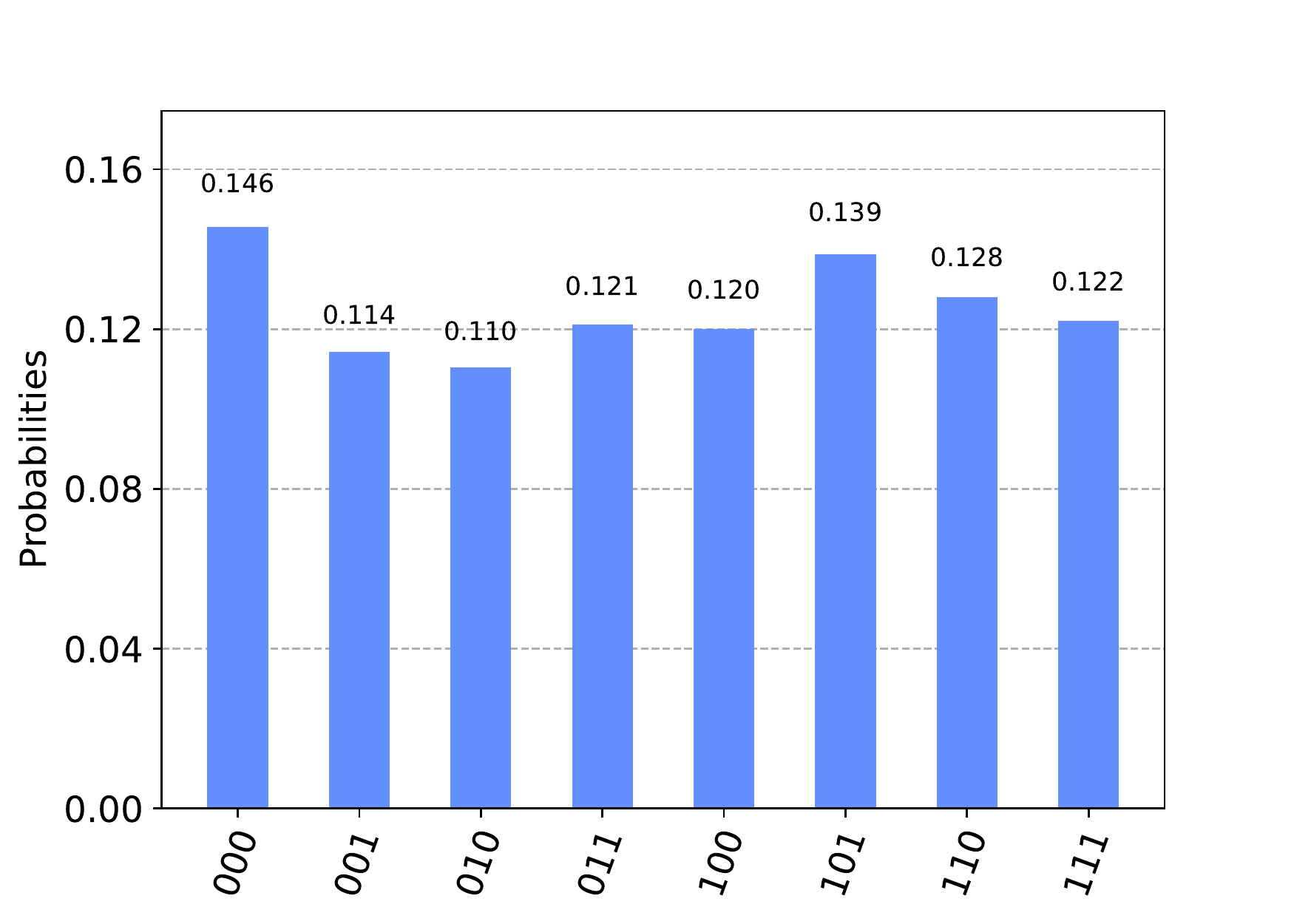}
    \caption{Using $N_{\text{optimal}} = 0$}
  \end{subfigure}
  \begin{subfigure}{0.48\textwidth}
    \includegraphics[width=\textwidth]{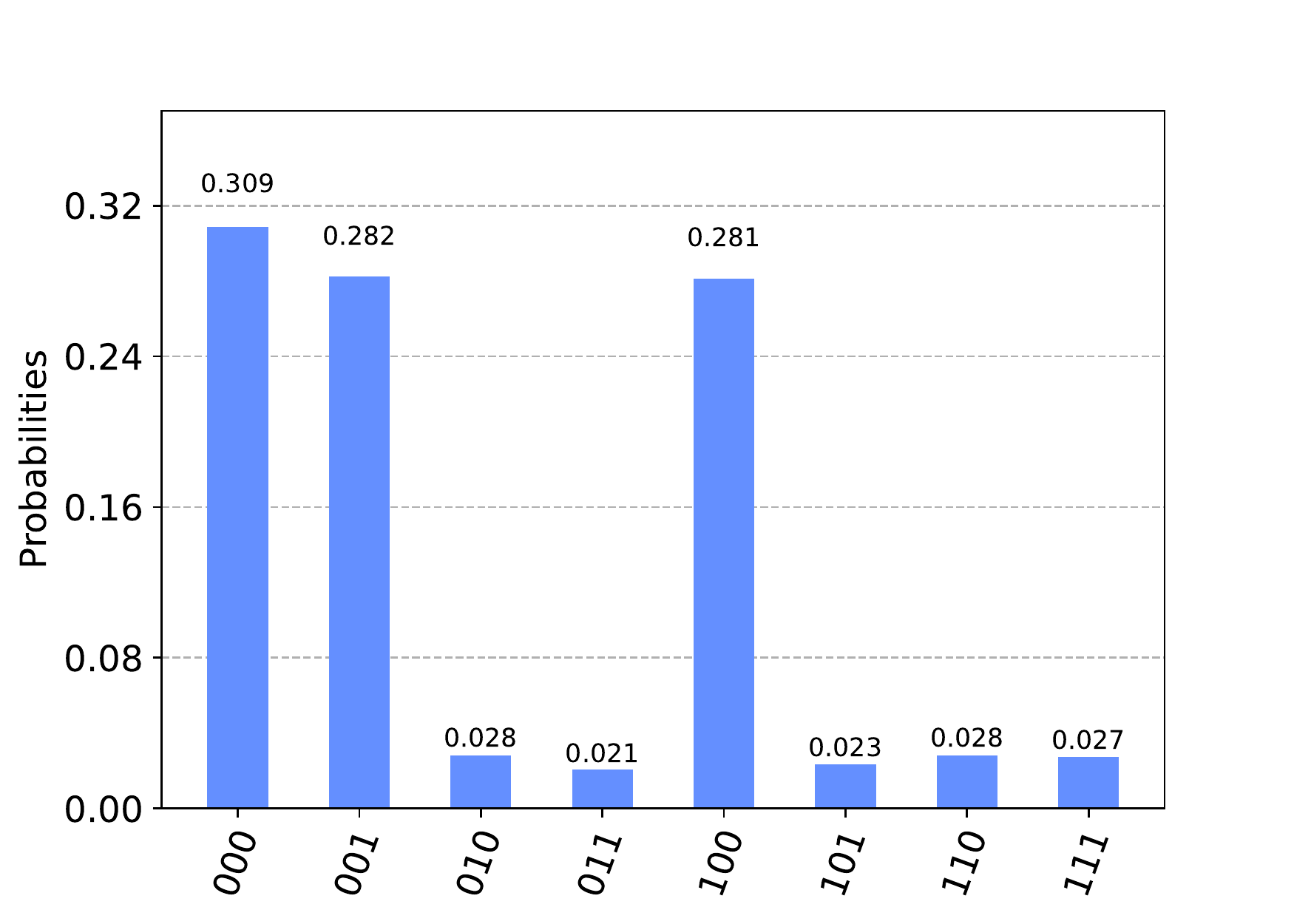}
    \caption{Using $N = 2$}
  \end{subfigure}
  \caption{\textbf{Failure case} Histogram showing the probability distribution
  of measuring the address qubits of the QVMP circuit for an $8 \times 8$ matrix
  with $(Ay \neq z)_j$ for $j \in \{2, 3, 5, 6, 7\}$. $N_{\text{optimal}} = 0$}
  \label{fig:qvmp_functionality_pfound_unknown__2}
\end{figure}

As the number of iterations increases, the amplitudes start oscillating around
the optimal value.  We define the period to be the number of oscillations
between optimals.  \cref{fig:qvmp_functionality_pfound_unknown__1} and
\cref{fig:qvmp_functionality_pfound_unknown__2} demonstrate cases when the
algorithm fails to find a solution even when one exists.  The former occurs when
$N$ is not equal to $\lfloor k * \text{period} \rfloor$ for some integer $k$.
For example, in \cref{fig:qvmp_functionality_pfound_unknown__1}, the period
$\approx 1.313$, $N = 4$, and the sequence of periods is $1, 2, 3, 5, \ldots$.
We can handle this by hypertuning the number of iterations to be close to the
optimal.  The latter occurs when the number of optimal Grover iterations rounds
down to zero. We can handle this by running the algorithm on the dual problem,
i.e., finding $j$ where $(Ay = z)_j$.

\subsection{Circuit Metrics} \label{sec:circuit_metrics}

\begin{table*}
    \centering
    \begin{subtable}{\textwidth}
      \resizebox{\textwidth}{!}{%
        \begin{tabular}{|l|c||c|c|c|c|c|c|c|c||c|c|c|c|}
          \hline
          Dimension & Mismatches & ccx &      cx &   x &     h & z &      u1 &    u2 & u3 & Grover iterations &   Depth & Qubits & Total gates \\
          \hline
           (4, 4) &          1 &  30 &       1 &  11 &     2 & 1 &       2 &     2 &  0 &                 1 &      44 &     11 &          49 \\
          (16, 4) &          2 &  16 &    6220 &  68 &   176 & 2 &    5216 &  1564 &  3 &                 2 &   10585 &     13 &       13265 \\
          (16, 8) &          2 &  32 &   10108 &  69 &   284 & 2 &    8460 &  2536 &  3 &                 2 &   16993 &     21 &       21494 \\
          (32, 4) &          2 &  24 &   42060 & 204 &   461 & 3 &   42037 &   504 &  6 &                 3 &   69510 &     14 &       85299 \\
          (32, 8) &          1 &  64 &   91408 & 272 &   997 & 4 &   91377 &  1056 &  8 &                 4 &  150375 &     22 &      185186 \\
         (32, 32) &          3 & 128 &  185912 & 151 &  2025 & 2 &  185897 &  2052 &  4 &                 2 &  303901 &     70 &      376171 \\
          (64, 8) &          3 &  48 &  300324 & 401 &  1602 & 3 &  300315 &  1632 &  4 &                 3 &  497172 &     23 &      604329 \\
         (64, 16) &          2 & 128 &  786960 & 538 &  4190 & 4 &  786948 &  4232 &  5 &                 4 & 1300351 &     39 &     1583005 \\
         (64, 64) &          3 & 384 & 2329596 & 432 & 12396 & 3 & 2329587 & 12426 &  4 &                 3 & 3843672 &    135 &     4684828 \\
          \hline
        \end{tabular}}
      \caption{MPS}
      \label{table:circuit_metrics_mps}
    \end{subtable}
    \begin{subtable}{\textwidth}
      \resizebox{\textwidth}{!}{%
        \begin{tabular}{|l|c||c|c|c|c|c|c|c|c||c|c|c|c|}
          \hline
          Dimension & Mismatches & ccx &      cx &   x &     h & z &      u1 &    u2 & u3 & Grover iterations &   Depth & Qubits & Total gates \\
          \hline
          (4, 4) &          1 &  30 &  1 &  11 & 2 & 1 &  2 &  2 &  0 &                 1 &    44 &     11 &          49 \\
         (16, 4) &          2 &  16 &  0 &  75 & 4 & 2 &  3 & 12 &  1 &                 2 &   261 &     13 &         113 \\
         (16, 8) &          2 &  32 &  0 &  76 & 4 & 2 &  3 & 12 &  1 &                 2 &   385 &     21 &         130 \\
         (32, 4) &          2 &  24 &  0 & 208 & 5 & 3 &  4 & 24 &  2 &                 3 &   684 &     14 &         270 \\
         (64, 8) &          3 &  48 &  0 & 405 & 6 & 3 &  4 & 30 &  2 &                 3 &  2040 &     23 &         498 \\
         (64, 8) &          1 &  96 &  0 & 801 & 6 & 6 &  7 & 60 &  5 &                 6 &  4077 &     23 &         981 \\
          \hline
        \end{tabular}}
      \caption{Statevector}
      \label{table:circuit_metrics_statevector_cpu}
    \end{subtable}
    \caption{Circuit metrics for MPS and statevector simulation methods on
    select dimensions. Depth and total gates were measured after transpilation.}
    \label{table:circuit_metrics}
  \end{table*}

\cref{table:circuit_metrics} reports circuit metrics (gate count, circuit depth,
qubit count) measured when using the MPS and statevector methods for the Grover
search portion of QVMP (step 2.3 in \cref{alg:qvmp_grover}. The metrics are
reported on circuits that used $N_{\text{optimal}}$ iterations.

The qubit count is precisely $m + \log_2(n) + n + 1$ for an $n \times m$ matrix.
In other words, it is linear in the size of the input. Circuit depth is an
important metric because it impacts both transpilation time and fidelity of the
output.

\begin{figure}[h!]
  \begin{subfigure}{0.48\textwidth}
    \includegraphics[width=\textwidth]{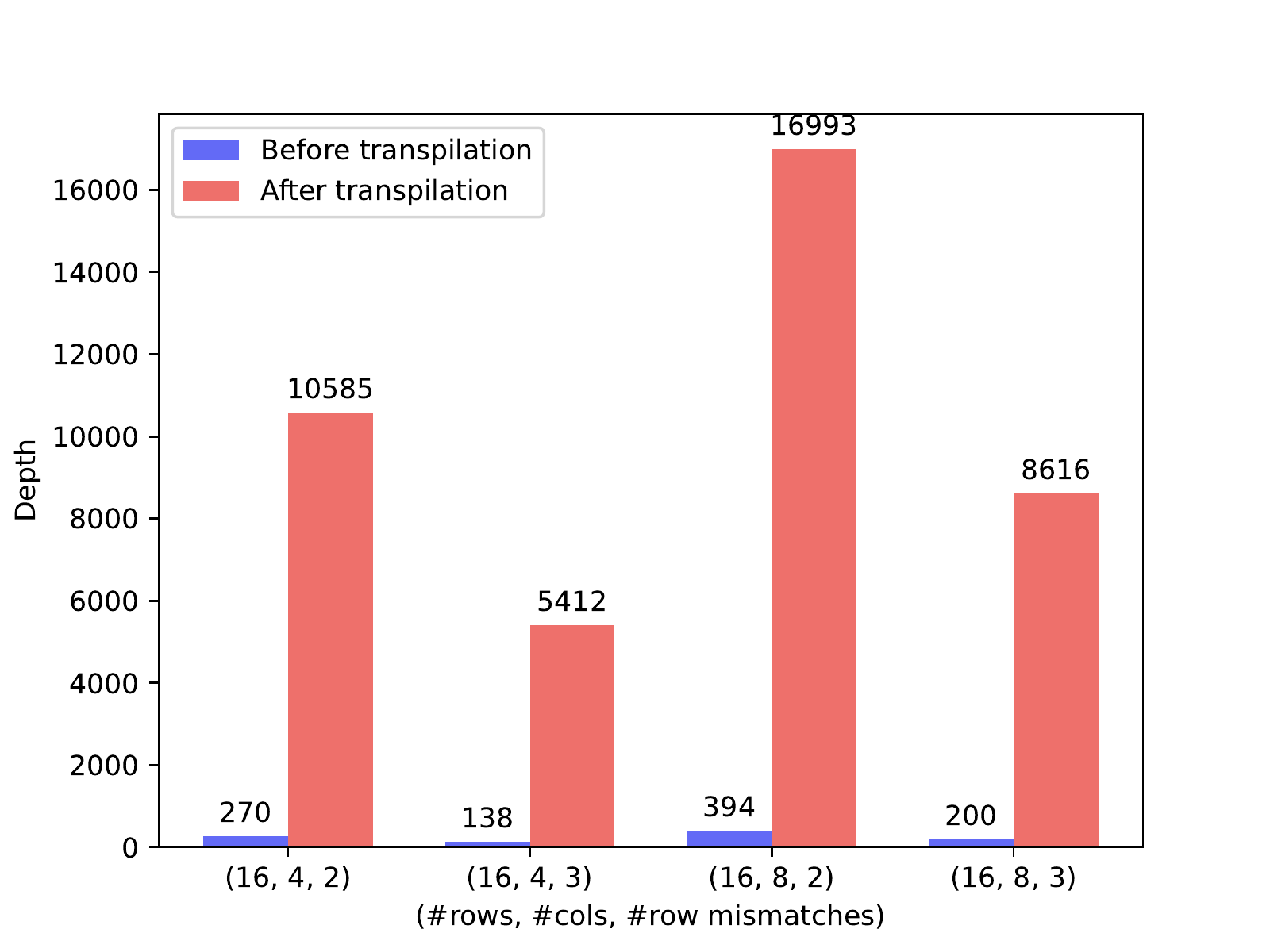}
    \caption{MPS}
  \end{subfigure}
  \begin{subfigure}{0.48\textwidth}
    \includegraphics[width=\textwidth]{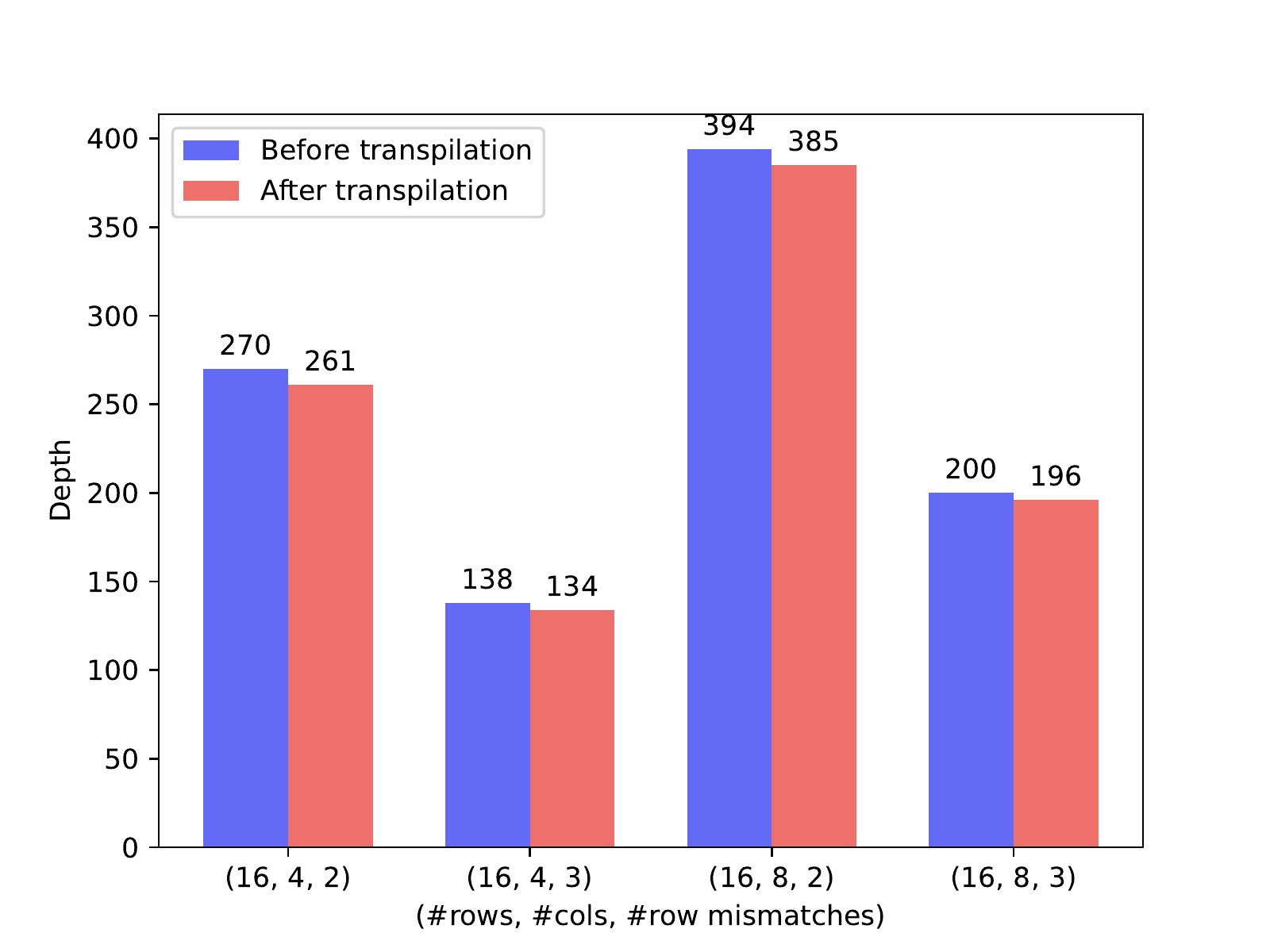}
    \caption{Statevector}
  \end{subfigure}
  \caption{Comparison of circuit depth before/after transpilation on select
  dimensions}
  \label{fig:circuit_depth_before_after}
\end{figure}

\cref{fig:circuit_depth_before_after} shows the circuit depths before and after
transpilation. When using the statevector method we see a small decrease in the
circuit depth. Using MPS, on the other hand, results in two to three orders of
magnitude more gate count and circuit depth. The additional gates for MPS seem
to be mostly CNOTs. A cursory look at a function call profile reveals that
transpilation for MPS spends most of its time performing SWAPs (which consist of
CNOTs). Future work will look into investigating why these SWAPs are performed
even though the target architecture (which is Aer in our case) is
fully-connected.

From these Aer simulations, we conclude that it is possible to run QVMP on up to
32 qubits using NISQ hardware.

\subsection{Transpilation and Simulation}

\begin{figure}[h!]
  \begin{subfigure}{0.48\textwidth}
    \includegraphics[width=\textwidth]{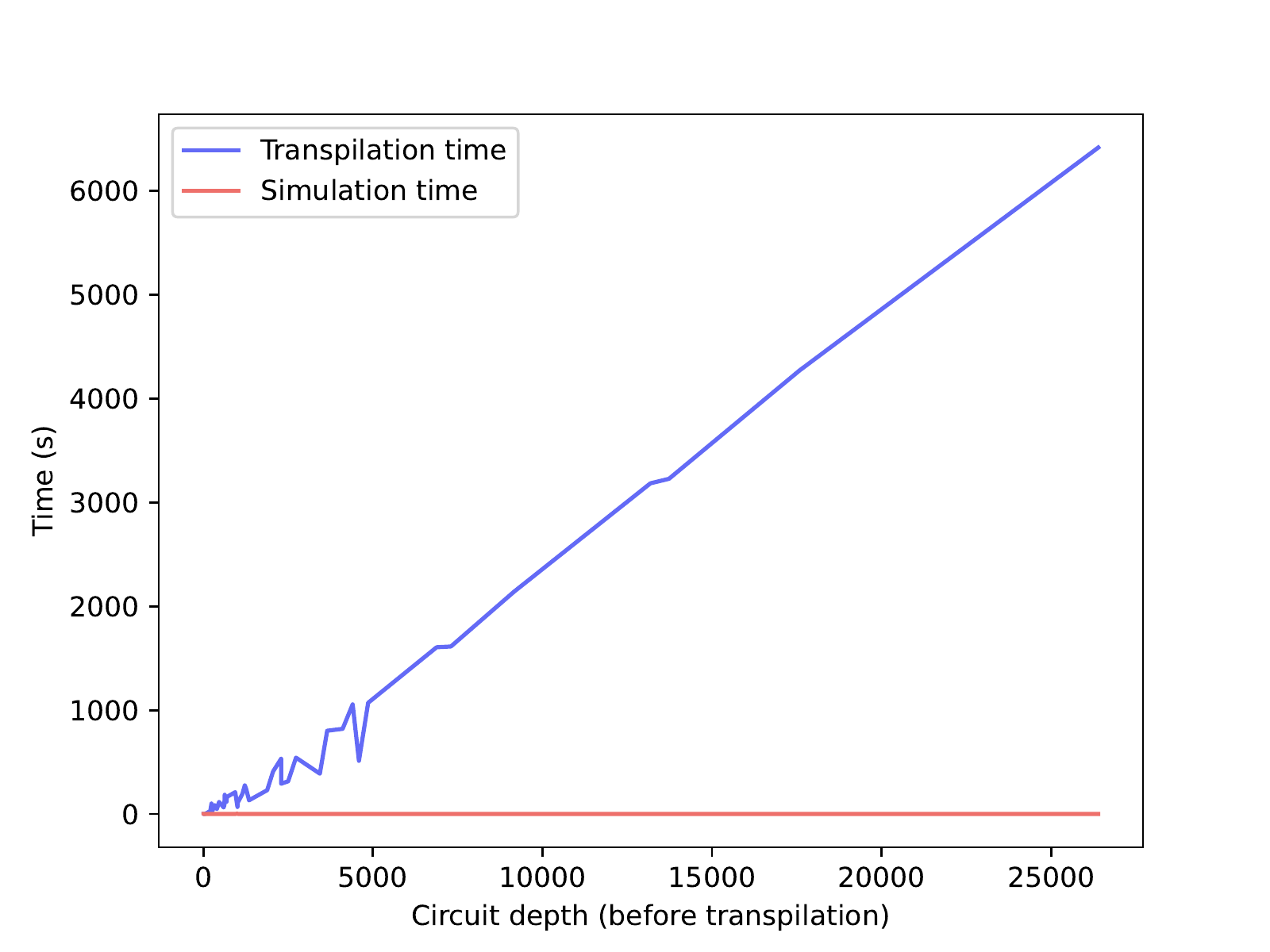}
    \caption{MPS}
  \end{subfigure}
  \begin{subfigure}{0.48\textwidth}
    \includegraphics[width=\textwidth]{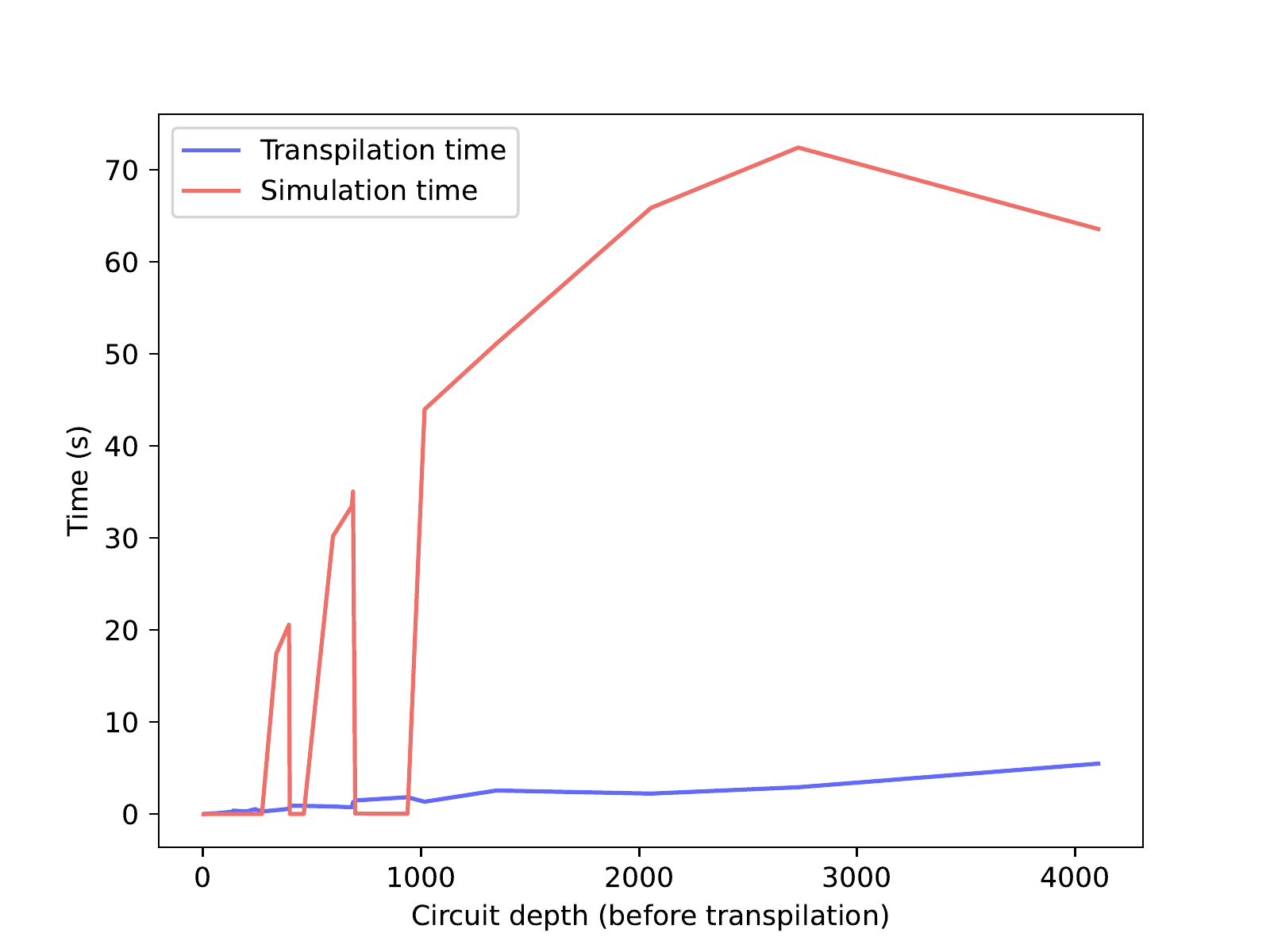}
    \caption{Statevector}
  \end{subfigure}
  \caption{Circuit depth vs Transpilation/Simulation time}
  \label{fig:circuit_depth_of_trans_v_sim}
\end{figure}

\cref{fig:circuit_depth_of_trans_v_sim} showcases how transpilation
time and simulation time change as the circuit depth increases. We observe
opposite trends. MPS seems to be spending more time on transpilation while its
simulation time remains mostly constant. Statevector, on the other hand,
spends more time during simulation with transpilation time growing at a slow
rate. We observe this trend mainly due to the gate explosion described in
\cref{sec:circuit_metrics}.

\subsection{Challenges}

Implementing and debugging Grover oracles in Qiskit is error-prone. The library
does not offer support for encoding matrices into quantum circuits and
performing operations on them.  The choice of encoding scheme is left to the
programmer. This can slow development time since programmers are now forced to
work at lower level of abstraction than desired. Debugging quantum circuits can
be tricky. Consider the following scenario. You've implemented a Grover oracle.
After running the simulation, however, you observe that your circuit is not
amplifying the correct states. This error could be due to a number of reasons:
ancilla qubits may have become entangled with the input qubits or diffusion
might have been applied on the wrong set of qubits, to name a few. Debugging
this in Qiskit involves inserting statevector probes to view the state of the
quantum system and then identify any discrepancies.  While not difficult to do,
it can eat up developer time. We believe that Grover oracle synthesis should be
automated to a high-degree. When possible, such a system should be able to
produce a proof of correctness.

There is also the challenge of scaling quantum circuit development.
Transpilation and simulation times can become bottlenecks in the iteration
cycle.

\section{Conclusion}

In this study we present an implementation of QVMP in Qiskit. We reported
circuit metrics (gate count, depth, qubit count) as well as transpilation and
simulation times. While QVMP can be simulated (and even run) on moderately-sized
inputs, it cannot, with current hardware, scale to a degree where we can
observe any quantum advantage. We demonstrate that the simulation method
selected can have a noticeable impact on depth of the circuit, the effects of
which trickle into transpilation and simulation time.

\begin{figure}
  \begin{lstlisting}[frame=single,language=ML, numbers=left]
(* Example program describing the QVMP oracle *)

[@@oracle]
let find_row_mismatch a y z =
  find_idx (fun idx value -> value <> z[idx]) (a * y)
  \end{lstlisting}
  \caption{High-level description of QVMP in the ML dialect.
  @@oracle is an attribute specifying that the expression that
  follows is an oracle that can be compiled to a reversible circuit}
  \label{fig:high_level_qvmp_oracle}
\end{figure}

Future work in this space involves extending existing reversible compilers (like
REVS \cite{amy2017verified} and a subset of Quipper \cite{green2013quipper}) to
support higher-level programming constructs like lists, records, and
multi-dimensional arrays. This will allow programmers to specify oracles as
high-level classical decision functions without having to think about encoding
schemes, thereby raising the level of abstraction which can boost productivity.
\cref{fig:high_level_qvmp_oracle} demonstrates such a description for QVMP.

Another direction involves investigating the gate and circuit depth
explosion observed when using the MPS simulation method. Cursory
investigation has revealed that the transpiler is inserting multiple SWAP gates
despite the target architecture being fully-connected.

Finally, we also wish to explore the space of more efficient encodings of
matrices and related operations to cut down on qubit counts.

\newpage

\printbibliography[title=Bibliography]

\end{document}